\documentclass[11pt]{article}

\usepackage[margin=1in]{geometry}
\usepackage[utf8]{inputenc}
\usepackage{amsmath,amssymb,amsthm,amsfonts,setspace,mathtools}
\usepackage[shortcuts]{extdash}
\usepackage{graphicx}
\usepackage{chngcntr}
\usepackage{apptools}
\usepackage{nicefrac}

\let\oldFootnote\footnote
\newcommand{\nextToken}{\relax}
\renewcommand{\footnote}[1]{\oldFootnote{#1}\futurelet\nextToken\isFootnote}
\newcommand{\isFootnote}{\ifx\footnote\nextToken\textsuperscript{,}\fi}

\usepackage{color}
\usepackage{hyperref}
\usepackage{xcolor}
\definecolor{dark-red}{rgb}{0.4,0.15,0.15}
\definecolor{dark-blue}{rgb}{0.15,0.15,0.4}
\definecolor{medium-blue}{rgb}{0,0,0.5}
\hypersetup{
    colorlinks, linkcolor={dark-red},
    citecolor={dark-blue}, urlcolor={medium-blue}
}
\usepackage{tikz-cd}
\usepackage[all]{xy}

\newtheorem{theorem}{Theorem}

\newtheorem{corollary}{Corollary}
\newtheorem{lemma}{Lemma}
\newtheorem{proposition}{Proposition}

\AtAppendix{\counterwithin{lemma}{section}}
\AtAppendix{\counterwithin{proposition}{section}}
\AtAppendix{\counterwithin{assumption}{section}}
\AtAppendix{\counterwithin{corollary}{section}}
\AtAppendix{\counterwithin{definition}{section}}
\AtAppendix{\counterwithin{theorem}{section}}
\AtAppendix{\counterwithin{claim}{section}}
\AtAppendix{\counterwithin{example}{section}}
\AtAppendix{\counterwithin{equation}{section}}
\AtAppendix{\counterwithin{remark}{section}}

\usepackage{caption}
\usepackage[labelformat=simple]{subcaption}

\theoremstyle{definition}
\newtheorem{definition}{Definition}

\theoremstyle{remark}

\newcommand{\zerodel}{.\kern-\nulldelimiterspace}

\renewcommand\emptyset\varnothing

\newcommand\ssm\smallsetminus

\newcommand\subtype{matched type}

\newcommand\multisetDef{X^{(\leq N)}}

\newcommand{\msubseteq}{\sqsubseteq}
\newcommand{\msubset}{\sqsubset}

\newcommand{\mcup}{\sqcup}
\newcommand{\mbigcup}{\bigsqcup}

\newtheorem{innercustomgeneric}{\customgenericname}
\providecommand{\customgenericname}{}
\newcommand{\newcustomtheorem}[2]{%
  \newenvironment{#1}[1]
  {\renewcommand\customgenericname{#2}%
   \renewcommand\theinnercustomgeneric{\ref*{##1}$'$}%
   \innercustomgeneric
  }
  {\endinnercustomgeneric}
}
\newcustomtheorem{primetheorem}{Theorem}

\theoremstyle{definition}

\newtheorem{innercustomgenericdef}{\customgenericname}
\providecommand{\customgenericname}{}
\newcommand{\newcustomdef}[2]{%
  \newenvironment{#1}[1]
  {\renewcommand\customgenericname{#2}%
   \renewcommand\theinnercustomgenericdef{\ref*{##1}$'$}%
   \innercustomgenericdef
  }
  {\endinnercustomgenericdef}
}
\newcustomdef{primedefinition}{Definition}

\usepackage{setspace}
\usepackage[margin=1in]{geometry}
\usepackage[margin=1in]{geometry}
\usepackage[round]{natbib}

\title{Many-to-many stable matching in large economies}
\author{
Michael Greinecker\thanks{CEPS, ENS Paris-Saclay.  Email: {\tt michael.greinecker@ens-paris-saclay.fr}. } \and
Karolina Vocke\thanks{Department of Economics, University of Innsbruck.  Email: {\tt karolina.vocke@gmx.de}. }}
\begin{document}
\maketitle
\abstract{We study stability notions for networked many-to-many matching markets with individually insignificant agents in distributional form. Outcomes are formulated as joint distributions over characteristics of agents and contract choices. Characteristics can lie in an arbitrary Polish space. We provide a mechanical method for transferring existence results for finite matching models to large matching models for many stability notions. In particular, we show that tree-stable and pairwise-stable outcomes exist.}

\section{Introduction}

This paper provides a toolkit for transferring existence results for stability concepts from finite matching markets or matching markets with finitely many types to markets with large type spaces and (implicitly) a continuum of agents. We consider many-to-many matching markets modeled via the joint distribution of the characteristics of the agents involved and show that existence results that hold for finite type spaces also hold in Polish (that is, separable and completely metrizable) spaces, including compact metrizable or Euclidean type spaces.\\

The theory of stable matching markets examines allocation problems where the identity of a partner matters, and markets operate without frictions. \citet{gale1962college} introduced the ``marriage market'' as the simplest model of a two-sided matching market. This model, and many-to-one generalizations, have been applied to solve allocation problems such as the allocation of students to colleges, assistant physicians to hospitals, and workers to jobs.  
In more complex markets, when multiple contracts can be signed on both sides or agents interact along complex networks, one faces the more complex structure of networked many-to-many markets.  In one-to-one matching markets, stability requires
the absence of a mutually desirable unsigned ``blocking'' contract between any pair of agents.
However, there is no single, canonical extension of this idea to many-to-many markets. Different notions of stability have been introduced that can differ in the shape of blocks or what it means to be a block; see \citet{klaus2009stable}. For most of the existing stability concepts, including pairwise stability, existence can only be guaranteed under strong restrictions on the network topology or on preferences. The roommates problem shows that not even pairwise stable outcomes are guaranteed to exist in finite one-to-one matching markets. Given this variety of stability notions, there is also a variety of existence results in finite markets and, more recently, in continuum models with finitely many types. \\ 

A recent literature on large matching markets has shown that stability concepts are better behaved in markets with a continuum of agents. Strong assumptions, such as substitutability assumption, familiar from models with finitely many agents can be substantially weakened or dropped in large markets to guarantee the existence of specific stability concepts; see \citet{azevedo2015existence}, \citet*{kojima2013matching}, and \citet{jagadeesan2021stability}. Apart from technical considerations,
\citet{vocke2020setwise} argues that most stability notions for many-to-many matching markets apply most naturally to markets that have many agents to begin with. In this paper, we are agnostic about the appropriate notion of stability and develop a tool that allows one to transfer any existence result for any stability notion from models with finitely many types or agents to markets with a continuum of agents and arbitrary types. \\

All the papers mentioned so far assume that there are only finitely many types or contracts. However, for many applications, this assumption is too restrictive. 
Models with transferable utility, for example, cannot be described with finite contract spaces. Typical distributional assumptions on, say, the income distribution (such as a Pareto distribution) require continuous distributions and \emph{a fortiori} a continuum of characteristics. Allowing for large contract and type spaces, however, brings new issues in defining appropriately adapted stability notions and proving existence.\\ 

In our model, there are Polish type and contract spaces $T$ and $Y$ with a population distribution on $T$. Outcomes are defined to be joint distributions over types and contract choices. Canonical examples of Polish spaces are compact metric spaces and the Euclidean space $\mathbb{R}^n$. Agents in our model are allowed to sign a finite number of contracts that are feasible for them under a continuous contract correspondence. To be able to use methods from topological measure theory, the space of contract choices has to be suitably topologized. Suppose that  an agent can choose two contracts from the unit interval. A sequence of such choices containing the contracts $0$ and $1/n$ should converge in the limit to a choice that still involves two contracts. Intuitively, the contract $0$ is chosen twice in the limit. This requires working with multisets of contracts, and we develop a novel topological theory of Polish multispaces (in Appendix \ref{app:multi}) to deal with the issue. \\

To reformulate existing stability notions to fit our model, we adapt an approach that has originally been introduced by \citet{greinecker2018pairwise} in the context of one-to-one matching models.
An outcome is stable under any stability notion that works with 
finitely many types, if the probability of finding a block by sampling independently from an outcome is zero. Our central result is that for these transferred stability notions, any existence result transfers as well. This crucially relies on the continuity of the utility function defined on multisets of contracts and the continuity of the contract correspondence. Combining our transfer result with the existence result of \citet{jagadeesan2021stability}, we can show that, in particular, pairwise and tree-stable outcomes exist in large markets with general type and contract spaces.\\

The remainder of this paper is organized as follows. In Section 2, we introduce our model and discuss different stability concepts. In Section 3, we state our main existence result and some auxiliary lemmas together with proof sketches. In Section 4, we give some illustrative examples to show what matching markets in large type spaces can look like. Section 5 concludes.

\paragraph{Related Literature}



Our work relies heavily on the work of \citet{greinecker2018pairwise}. 
They introduce a distributional matching model 
and consider a two-sided one-to-one continuum market with general type spaces. 
  \citet{greinecker2018pairwise} show that stable outcomes exist as a consequence of the existence of pairwise stable outcomes in finite two-sided one-to-one markets. They also allow for externalities and show that the continuum model can be approximated by finite markets. \citet{carmona2024stable} extend the model to allow for one-to-many matching and the option for agents to choose their occupation.
  The specific definition of a networked many-to-many market and the definition of stability concepts in our model rely on \citet{jagadeesan2021stability}, which generalizes \citet{azevedo2015existence}
  large-market model to networks. \citet{ostrovsky2008stability} models matching in networks, but required contracts to have natural ``buyer'' and ``seller'' counterparties. He introduces the concept of tree-stability in finite markets, which is adapted to a continuum of agents in \citet{jagadeesan2021stability}.\\ 

  There is a huge literature delivering existence results for matching markets under different restrictions on the market structure or on preferences. In finite markets, \citet{gale1962college} show that stable outcomes in two-sided one-to-one markets always exist. They also show that, if agents have substitutable preferences, stability and pairwise stability coincide, and stable outcomes exist in two-sided many-to-many markets.
  \citet*{che2019stable} show the existence of stable outcomes in large-market many-to-one settings with complementarities. Their model is extended to allow for externalities by \citet{CarmonaLaohakunakorn2023}. \citet{azevedo2015existence} show that under a one-sided substitutability condition, stable outcomes exist in two-sided many-to-many markets with a continuum of agents and finitely many types. \citet{jagadeesan2021stability} show that in networked many-to-many markets with a continuum of agents and finitely many types, tree-stable, hence, pairwise stable outcomes always exist for general preferences.\\
  
  In continuum one-to-one matching markets, \citet*{goz92} and \citet*{CMNTransport} already proved the existence of stable matchings with transferable utility using the duality theory of optimal transport. The arguments are specific to the transferable utility setting. \citet{noldeke2015implementation} prove existence of stable matchings with imperfectly transferable utility. They, too, approximate their continuum model with a finite model and use a compactness argument to obtain a stable matching for the limit economy. However, their compactness argument operates in the joint space of measures and utility functions.
  
  





\section{Model}

Let $T$, $Y$ be Polish spaces.
We interpret $T$ as the space of types an agent can have, with population distribution $\nu\in\Delta(T)$,\footnote{For $X$ a Polish space, $\Delta(X)$ represents the space of Borel probability measures on $X$, topologized with the topology of weak convergence. Products of Polish spaces are endowed with their product topologies, which are, again, Polish.} and $Y$ as a space of contracts. For each contract, there is an additional specification of two sides, given by $X=Y\times\{1,2\}$. Contracts may not be symmetric because they might require different roles from the contract partners, such as being a buyer or seller. However, the same agent could take different sides for different contracts. We will define a contract correspondence by the restriction to signing at most $N$ contracts and a restriction of which contracts are feasible between two agents of certain types.  Not all elements in $X$ are feasible for every pair of agents. The set $\bar{\chi}(t,t')\subseteq Y$ describes the possible contracts a pair of agents of types $(t,t')$ can sign. Formally, there is a continuous compact-valued correspondence $\bar{\chi}:T\times T\rightrightarrows Y$, such that a contract $y$ lies in $\bar{\chi}(t,t')$ if agents of type $t$ can sign the contract $(y,1)$ and agents of type $t'$ can sign the contract $(y,2)$, respectively. Thus, for an agent of type $t$, the set of feasible contracts in $X$ is given by \[X_t=\{(y,1)\mid y\in \bar{\chi}(t,t')\text{ for some }t'\in T\}\cup \{(y,2)\mid y\in \bar{\chi}(t',t)\text{ for some }t'\in T\}.\]
We assume that individual agents can sign a maximum of $N$ contracts. The option of agents signing some of the contracts multiple times is captured in our model by considering \emph{multisets} instead of sets. For $n\in \mathbb{N}$ the set $X^{(n)}$ is the set of possible multisets with $n$ elements in $X$. Each such multiset is described by a function $m:X\to \mathbb{N}$ that has value $0$ at all but finitely many points and such that $\sum_{(x,l)\in m}l=n$.  
The set $\multisetDef=\bigcup_{n\leq N}X^{(n)}$ is the set of possible multisets of contracts with at most $N$ elements. The set $X^{(\leq N)}$ is equipped with a suitable Polish topology inherited from the topology on $X$; see Appendix \ref{app:multi}. A multiset $m$ is a \emph{multisubset} of a multiset $m'$, written $m\msubseteq m'$, if $m(x)\leq m'(x)$ for all $x\in X$.\\

Signing no contract is always feasible.
We use $\bar{\chi}$ to derive the \emph{contract correspondence} $\chi:T\rightrightarrows\multisetDef$, which specifies the set of possible multisets of contracts an agent $t$ can sign and is given by $\chi(t)=X_t^{(\leq N)}$ with $\emptyset=X^{(0)}\in\chi(t)$ for all $t\in T$. \\

\emph{Preferences} over multisets of contracts are
represented by a continuous utility function $u:T\times \multisetDef\to\mathbb{R}$. We write $u_t:\multisetDef \to\mathbb{R}$ for the t-section.\footnote{Implicitly, a type has preferences over multisets of contracts they could not participate in. However, we really need $u$ only defined on the closed set of such feasible type-contract-combinations. The function can then be extended by the Tietze extension theorem to the ambient space. The specific extension will be irrelevant for the arguments we make.}\\

We call the quintuple $(T,Y,\nu,\chi,u)$ a \emph{matching problem}. The matching problem is \emph{finite} if $T$ and $Y$ are finite and $\nu$ has only rational values. A finite matching problem is one that could be realized as the population distribution of a finite number of agents. 


For $\tau\in\Delta(V\times W)$, we write $\tau_V$ and $\tau_W$ for the respective marginals. If $\tau$ is a measure on a multispace, we write $\tau^*$ for the measure on the underlying space that counts sets with the corresponding multiplicity. We use it to go from a measure specifying how often certain  contract choices are made, a measure over multisets, to the measure specifying how often each contract is signed. The formal definition of this $*$--mapping is somewhat involved and given in Appendix \ref{app:multi}.

 
The following definition adapts the usual definition of outcomes from finite markets to a distributional setting.

 \begin{definition}
 \label{def:outcome}
  An \emph{outcome} for $\nu$ is a probability measure $\mu\in \Delta\big(T\times \multisetDef\big)$ such that
  \begin{enumerate}
  \item The $T$-marginal of $\mu$ is $\nu$.
   \item For $\mu$-almost all $(t,\mathcal{Y})$, we have $\mathcal{Y}\in\chi(t)$.

  \item 
  
  For every Borel set $E\subseteq Y$, we have
 \[\mu_{\multisetDef}^*(E\times\{1\} )=\mu_{\multisetDef}^*(E\times\{2\}).\]
  \end{enumerate} 
\end{definition}
The first condition guarantees that the implicit type distribution in an outcome agrees with the given population distribution, the second condition guarantees that nobody signs unavailable contracts, and the third condition guarantees that every contract is signed by the same number of counter-parties on both sides.

\subsection{Stability}
\paragraph{Blocking structures} Intuitively, a block consists of a group of finitely many agents that can improve over a given outcome by getting together and signing new contracts, while possibly dropping existing contracts. Stability is defined by the absence of any blocks. However, in a large market it is not entirely clear what it means that a block arises. Formally, agents are not data of the model; only the joint distribution of types and contract choices is specified in an outcome. However, we can sample agents by repeated draws from an outcome. We consider the outcome stable if the probability of such a draw giving rise to a block is zero. Since only the type and the set of existing contracts matters for the incentive to deviate, we define blocks on the level of matched types. Formally, a \emph{\subtype} $(t,\mathcal{Y})$ consists of a type $t \in T$ and a multiset $\mathcal{Y} \in \multisetDef$.\\

If single agents have an incentive to unilaterally drop existing contracts, they can form an individual block all by themselves.
\begin{definition}
   An individual block consists of a \subtype\  $(t,\mathcal{Y})$ such that there exists $\mathcal{W}\msubset \mathcal{Y}$ with $u_t(\mathcal{W})>u_t(\mathcal{Y})$.
\end{definition}

To define stability for groups, we now define more complex blocks with the shapes of arbitrary graphs.  We then restrict attention to specific shapes of graphs to define tree-stability and pairwise stability. We consider graphs $G$ whose set of vertices is $\{1,2,\ldots,n\}$ and specify directed graphs by their sets of (directed) edges.
Formally, a \emph{graph with $n$ vertices} is a family $G$ of pairs from $\{1,2,\ldots,n\}$; the members of $G$ are called \emph{edges}.
A graph is a \emph{tree} if there exists a unique path between each pair of vertices.\footnote{
A \emph{path} between the vertices $j$ and $k$ in  $G$ is a finite sequence $(j,n_1), (n_1,n_2),\ldots,(n_m,k)$ in $G$.
}

Instead of identifying the participating agents directly, we specify their \subtype{s} $(i_1,\mathcal{Y}^1),\ldots,(i_n,\mathcal{Y}^n)$---which are sufficient to determine whether the agents want to deviate.\footnote{Note that the \subtype{s} $(i_1,\mathcal{Y}^1),\ldots,(i_n,\mathcal{Y}^n)$ involved in a block need not be distinct.}

\begin{definition}
\label{def:block}
  Let $G$ be a simple directed graph with $n$ vertices.
A block of shape $G$ consists of
\begin{itemize}
    \item[-] a \subtype\ $(t_j,\mathcal{Y}_j)$ for each vertex $1 \le j \le n$, and
    \item[-] a contract $x_{(k,\ell)} \in Y$ for each edge $(k,\ell) \in G$
\end{itemize}
for which
\begin{itemize}
\item[-] writing

\[\mathcal{Z}^j = \mbigcup_{k\mid (j,k)\in G}(x_{(j,k)},1) \mcup \mbigcup_{k\mid (k,j)\in G}(x_{(k,j)},2),\]

2n
we have that there exists a $\mathcal{W}^j\msubseteq \mathcal{Z}^j\mcup \mathcal{Y}^j$ such that $\mathcal{W}^j\in \chi(t_j)$, $\mathcal{Z}^j\msubseteq \mathcal{W}^j$, and $u_{t_j}(\mathcal{W}^j)>u_{t_j}(\mathcal{Y}^j)$ for each $1\leq j \leq n$. 
\end{itemize}




\end{definition}

The last condition says that for each \subtype\  $(t_j,\mathcal{Y}_j)$, there exists a feasible multiset of contracts $\mathcal{W}^j$
that contains all newly proposed contracts $\mathcal{Z}^j$ and possibly some of the \subtype’s existing contracts $\mathcal{Y}^j$, and that yields a strict utility improvement over the current
contract multiset $\mathcal{Y}^j$.

\paragraph{Stability in finite type and contract spaces}
In this section, we introduce stability concepts for a finite type space $T_F$ and a finite contract space $Y_F$.\footnote{Still, we do not restrict $\nu$ to have rational values; the population distributions need not be the type distribution of a finite number of agents.} 
In a large market of this kind, an outcome is stable if no positive measure of agents has an incentive to \emph{block} the outcome. Intuitively, a finite number of agents cannot block an outcome, so we require that there are positive masses of agents of each of the participating \subtype{s} for a block to arise at $\mu$. \\

An individual block for a \subtype\ $(t,\mathcal{Y})$ \emph{arises} at an outcome $\mu$ if the \subtype\ has positive measure, i.e. $\mu(t,\mathcal{Y})>0$.

\begin{definition}
\label{def:indrat}
   An outcome is \emph{individually rational} if no individual block arises.
\end{definition}

Let $n>1$ and $G$ be a block with $n$ vertices. A $G$-block \emph{arises} at an outcome $\mu$ if $\mu(t_j,\mathcal{Y}^j) > 0$ for all $1\leq j\leq n$, with $(t_j,\mathcal{Y}^j)$ the matched type assigned to the vertex $j$.\\

By varying the possible shapes of the underlying graphs of blocks, one can specify various known stability notions.
\begin{definition}
\label{def:stab}
Let $\mathcal{G}$ be a family of graphs. An outcome $\mu$ is \emph{$\mathcal{G}$-stable} if $\mu$ is individually rational and no $G$-block arises for any 
$G\in\mathcal{G}$.

\begin{itemize}
\item[-] If $\mathcal{G}$ is the family of graphs with two vertices and one edge, then we call a $\mathcal{G}$-stable outcome \emph{pairwise stable}. 
    \item[-] If $\mathcal{G}$ is the family of all trees, then we call a $\mathcal{G}$-stable outcome \emph{tree-stable}.
    \item[-] If $\mathcal{G}$ is the family of all graphs, then we call a  $\mathcal{G}$-stable outcome \emph{stable}. 
\end{itemize}%

\end{definition}
While we do not formally allow for multiple contracts between pairs of agents in a block, this restriction does not affect the definition of stability in our large-market model---as shown by \citet{jagadeesan2021stability} in their Appendix B1.\\

We are going to leverage the main existence result of \citet{jagadeesan2021stability}. \citet{jagadeesan2021stability} show that tree-stable outcomes exist whenever all type and contract spaces are finite; see Lemma \ref{lem:TYfinite} for the statement of their theorem in the language of our model.
Note that if the underlying network of possible contracts is acyclic, every tree-stable outcome is trivially also stable. Less trivially, it is shown by  \citet{jagadeesan2021stability} that if preferences are substitutable, then every tree-stable outcome is stable.

\paragraph{Stability with large type spaces}

In this section, we adapt the definition of stability to potentially large type and contract spaces. Let $T$ and $Y$ be arbitrary Polish spaces. Intuitively, we test stability via random samples from the outcome.\\

\noindent An \emph{$n$-sample} is a finite sequence of \subtype s $t^n=(t_1,\mathcal{Y}_1),\ldots, (t_n,\mathcal{Y}_n)$ of length $n$.
By abuse of notation, we call an $n$-sample a $G$-block if there is a block of shape $G$ with $n$ vertices with the assignment of the \subtype s to vertices given by $t^n$.\\

\noindent An individual block arises at an outcome if 
the set of individual blocks has positive $\mu$-measure. 
A block of shape $G$ with $n$ vertices \emph{arises} at the outcome $\mu$, if the set of $G$-blocks has positive $\otimes^n \mu$-measure, where $\otimes^n \mu$ is the $n$-fold product measure derived from $\mu$. \\

\noindent An outcome $\mu$ is $\mathcal{G}$-stable if no block of shape $G$ arises for any $G\in\mathcal{G}$. If type and contract spaces are finite, this is equivalent to the previous stability definition.


\section{Existence in large type spaces}

This section contains our central existence result, which shows that $\mathcal{G}$-stable outcomes exist in the general model with Polish type and contract spaces, whenever $\mathcal{G}$-stable outcomes exist for finite type and contract spaces. In particular, tree-stable outcomes exist in full generality.\\

\noindent Our existence argument adapts the one of \citet{greinecker2018pairwise}. We first go from finite type and contract-spaces to finite type spaces but general contract spaces. We approximate the infinite contract spaces by smaller finite subspaces for which stable outcomes exist by assumption. A compactness argument allows us to take a convergent subsequence that converges to an outcome for the general space of contracts, and that outcome is, by a continuity argument, stable. We then approximate the general type distributions with distributions of types that have finite support. We choose stable matchings for those and, similarly, employ again a compactness argument to pass to an outcome for the limit type distribution. Again, the resulting matching will be stable. The central lemmas guarantee the needed compactness in Lemma \ref{Lemmacompact} and Lemma \ref{tight}, and the openness of the space of $G$-blocked samples in Lemma \ref{open}. The latter is what is needed for our continuity arguments.\\

\noindent For our compactness argument, we require an additional assumption. In general, the correspondence $\chi$ need not have a closed graph, which we will require for our existence proof. It is not hard to show that the correspondence $\chi$ has a closed graph if and only if the correspondence $t\mapsto X_t$ has a closed graph. To guarantee that the correspondence $t\mapsto X_t$ has a closed graph and to simplify a further compactness argument, we assume there is an upper hemicontinuous and compact-valued correspondence $P:T\rightrightarrows T$ such that $\bar{\chi}(t,t')=\emptyset$ for $t'\notin P(t)$. Let $P_1$ be the correspondence given by $t\to \{t\}\times P(t)$ and $P_2$ be the correspondence given by $t\to P(t)\times\{t\}$. Then, 
\[X_t=(\bar{\chi}\circ P_1)(t)\times\{1\}\cup(\bar{\chi}\circ P_2)(t)\times\{2\},\]
which is a compact-valued upper-hemicontinuous correspondence by \citet[Theorem 17.23, Theorem 17.27, Theorem 17.28]{AliprantisBorder2003} and has, therefore, a closed graph by \citet[Theorem 17.10]{AliprantisBorder2003}. Under this assumption, each type can only sign contracts with a compact set of types. However, this assumption holds without loss of generality if contracts with types outside $P(t)$ could not be individually rational for type $t$. While this assumption is automatically satisfied if $T$ itself is compact, this model can also be applied to more general situations without compact type spaces.

The following three lemmas take care of some technical details in the main existence proof.
 \begin{lemma} 
 \label{Lemmacompact}The set
\[\big\{(\nu,\mu)\in\Delta(T)\times\Delta(T\times \multisetDef)\mid \mu\textnormal{ is an outcome for }\nu\big\}\]
is closed. 
\end{lemma}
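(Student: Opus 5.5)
We take a sequence $(\nu_k,\mu_k)$ of pairs with $\mu_k$ an outcome for $\nu_k$, converging weakly to $(\nu,\mu)$, and check the three conditions of Definition \ref{def:outcome} for $(\nu,\mu)$ one at a time. Since all spaces involved are Polish, the product of the spaces of probability measures is metrizable, so sequential closedness suffices.

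The first condition follows from continuity of the marginal map. The projection $T\times\multisetDef\to T$ is continuous, so the pushforward $\mu\mapsto\mu_T$ is weakly continuous. Hence $\nu=\lim\nu_k=\lim(\mu_k)_T=\mu_T$.

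For the second condition we use the closed graph of $\chi$. This was derived just before the lemma from the upper hemicontinuous compact-valued correspondence $t\mapsto X_t$ built from $P$ and $\bar\chi$. Let $F=\operatorname{graph}\chi$, a closed subset of $T\times\multisetDef$. We have $\mu_k(F)=1$ for all $k$. By the Portmanteau theorem, $\mu(F)\ge\limsup_k\mu_k(F)=1$.

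The third condition is the main obstacle, because it involves the $*$-mapping of Appendix \ref{app:multi}. Our approach is to reformulate it with test functions. For a bounded continuous $f:Y\to\mathbb{R}$, define $\hat f_i(m)=\sum_{(y,i)}m(y,i)f(y)$ on $\multisetDef$, the sum of $f$ over the side-$i$ contracts counted with multiplicity. Then $\int f\,d\mu^*_{\multisetDef}(\cdot\times\{i\})=\int\hat f_i\,d\mu_{\multisetDef}$.

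We need two facts about $\hat f_i$. First, $\hat f_i$ is continuous on $\multisetDef$. This should follow from the construction of the multispace topology in the appendix: convergence of multisets amounts to convergence of their elements up to permutation, within each cardinality, and a finite symmetric sum of continuous functions is invariant under permutation. Second, $\hat f_i$ is bounded, since $|\hat f_i|\le N\|f\|_\infty$.

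Condition 3 is then equivalent to $\int(\hat f_1-\hat f_2)\,d\mu_{\multisetDef}=0$ for all $f\in C_b(Y)$. One direction is immediate. For the converse, note that two finite Borel measures on a Polish (metric) space that agree on all bounded continuous functions coincide. Here the measures have total mass at most $N$, so they are finite.

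Now $\hat f_1-\hat f_2$ is bounded and continuous, and $\mu_k\to\mu$ weakly, so
\[
\int(\hat f_1-\hat f_2)\,d\mu_{\multisetDef}=\lim_k\int(\hat f_1-\hat f_2)\,d(\mu_k)_{\multisetDef}=0.
\]
This gives the third condition for $\mu$.

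The delicate point is verifying, from the appendix's definition of the topology on $\multisetDef$ and of $\tau^*$, that $\hat f_i$ is continuous and that it represents integration against $\tau^*$. If the topology instead treats different cardinalities as a disjoint union, continuity is still clear, since the function is continuous on each clopen piece.
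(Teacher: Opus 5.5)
Your proof is correct and follows essentially the same route as the paper: conditions 1 and 2 use continuity of the marginal map and the Portmanteau theorem on the closed graph of $\chi$. For condition 3 your $\hat f_i$ is exactly the function $\sum_j g\circ\pi_j\circ s$ from the proof of Lemma~\ref{counting} with $g(y,l)=f(y)\mathbf{1}[l=i]$, so you simply merge the paper's separate step (continuity of $\tau\mapsto\tau(\cdot\times\{i\})$, which uses that $\{i\}$ has empty boundary) into your choice of test function, continuous because $\{i\}$ is clopen.
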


\begin{lemma} \label{tight}
 Let $\langle \nu_n\rangle$ be a sequence in $\Delta(T)$ converging to $\nu$ and let $\langle \mu_n\rangle$ be a sequence of corresponding outcomes. Then a subsequence converges to an outcome $\mu$ for $\nu$.
\end{lemma}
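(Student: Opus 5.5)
By Lemma \ref{Lemmacompact}, it suffices to show that $\langle \mu_n\rangle$ has a weakly convergent subsequence. Any limit $\mu$ then satisfies that $(\nu,\mu)$ lies in the closed set of Lemma \ref{Lemmacompact}, so $\mu$ is an outcome for $\nu$. Since $T\times\multisetDef$ is Polish, Prokhorov's theorem reduces the task to showing that $\{\mu_n\}$ is tight. I will build, for each $\varepsilon>0$, a compact set $L\subseteq T\times \multisetDef$ with $\mu_n(L)\geq 1-\varepsilon$ for all $n$.

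\textbf{Step 1 (types).} The sequence $\nu_n\to\nu$ is convergent in $\Delta(T)$, so $\{\nu_n\}\cup\{\nu\}$ is relatively compact. By the converse half of Prokhorov's theorem, valid in Polish spaces, it is tight. Hence there is a compact $K\subseteq T$ with $\nu_n(K)\geq 1-\varepsilon$ for all $n$.

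\textbf{Step 2 (contracts).} For types in $K$, the feasible contract sets are uniformly compact. The correspondence $t\mapsto X_t$ is upper hemicontinuous and compact-valued, since it arises from $\bar\chi$ composed with $P_1,P_2$. By \citet[Lemma 17.8]{AliprantisBorder2003}, the image $C=\bigcup_{t\in K}X_t$ is therefore a compact subset of $X$. I will then show, in the spirit of Appendix \ref{app:multi}, that $C^{(\leq N)}$ is compact in $\multisetDef$. The natural route is to use that $C^{(\leq N)}$ is the image of the compact space $\bigsqcup_{k\leq N} C^k$ under the continuous quotient map sending a tuple to its multiset. That this map is continuous is what the multispace topology is designed to ensure, e.g.\ that $(0,1/n)\to\{0,0\}$.

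\textbf{Step 3 (assembly).} Let $L=\{(t,\mathcal Y)\mid t\in K,\ \mathcal Y\in\chi(t)\}$. This set is contained in $K\times C^{(\leq N)}$, which is compact. It is also closed, because $\chi$ has a closed graph under the standing assumption on $P$. Hence $L$ is compact. Condition 2 of Definition \ref{def:outcome} gives $\mu_n(\{(t,\mathcal Y)\mid \mathcal Y\in\chi(t)\})=1$. Together with $(\mu_n)_T=\nu_n$, this yields
\[\mu_n(L)\geq \nu_n(K)\geq 1-\varepsilon \quad\text{for all } n,\]
which proves tightness.

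The main obstacle is Step 2, specifically the compactness of $C^{(\leq N)}$ inside $\multisetDef$. This depends on how the topology of Appendix \ref{app:multi} is defined. I would first try to verify that the canonical map from ordered $k$-tuples to multisets is continuous, and then invoke the fact that a continuous image of a compact set is compact. If the appendix instead metrizes multisets by a matching (Hausdorff-type) metric, I would prove compactness directly by sequential extraction: pass to a subsequence along which every coordinate converges and the cardinality is constant.
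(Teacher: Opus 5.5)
Your proposal is correct and is essentially the paper's own proof: tightness of $\{\nu_n\}$ from convergence, then a compact set of feasible multisets over the compact $K$ (from upper hemicontinuity of $t\mapsto X_t$ and continuity of the quotient map $X^k\to X^{(k)}$), then Prokhorov's theorem combined with Lemma \ref{Lemmacompact}. The only difference is cosmetic: the paper lifts upper hemicontinuity to $\chi$ itself and uses the compact set $K_\epsilon\times\chi(K_\epsilon)$ directly, whereas you push $C=\bigcup_{t\in K}X_t$ through the quotient map and then intersect with the graph of $\chi$ (you could even skip that closedness step, since $K\times C^{(\leq N)}$ is compact and already has $\mu_n$-measure at least $1-\varepsilon$).
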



\begin{lemma}\label{open}
For each shape $G$ with $n$ vertices the set of $G$-blocked $n$-samples is open.
\end{lemma}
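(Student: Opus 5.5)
Fix a shape $G$ with $n$ vertices and a $G$-blocked $n$-sample $(t_1,\mathcal{Y}^1),\ldots,(t_n,\mathcal{Y}^n)$, witnessed by contracts $x_{(k,\ell)}\in Y$ for $(k,\ell)\in G$ and multisets $\mathcal{W}^j$ with $\mathcal{Z}^j\msubseteq\mathcal{W}^j\msubseteq\mathcal{Z}^j\mcup\mathcal{Y}^j$, $\mathcal{W}^j\in\chi(t_j)$ and $u_{t_j}(\mathcal{W}^j)>u_{t_j}(\mathcal{Y}^j)$. I will show that every sequence of $n$-samples $(t_j^m,\mathcal{Y}^{j,m})_j$ converging to it is eventually $G$-blocked. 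Since $(T\times\multisetDef)^n$ is metrizable, this sequential statement yields openness.

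The witnesses will be built by perturbing the given ones. First I choose the new contracts $x^m_{(k,\ell)}\in\bar{\chi}(t^m_k,t^m_\ell)$ converging to $x_{(k,\ell)}$. For this I want $\bar{\chi}$ to be lower hemicontinuous, which the standing assumption that $\bar{\chi}$ is a continuous correspondence provides. Lower hemicontinuity at $(t_k,t_\ell)$ gives such a sequence, because $x_{(k,\ell)}\in\bar{\chi}(t_k,t_\ell)$. Hence $\mathcal{Z}^{j,m}\to\mathcal{Z}^j$ in $\multisetDef$; I will need the lemma from Appendix~\ref{app:multi} that the multiset union $\mcup$ is continuous.

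Next I handle the retained part of the old contracts. Write $\mathcal{W}^j=\mathcal{Z}^j\mcup\mathcal{R}^j$ with $\mathcal{R}^j\msubseteq\mathcal{Y}^j$. By the description of the topology on multisets, convergence $\mathcal{Y}^{j,m}\to\mathcal{Y}^j$ means that for large $m$ the multiset $\mathcal{Y}^{j,m}$ has the same cardinality as $\mathcal{Y}^j$ and its points can be enumerated so that they converge pointwise to an enumeration of $\mathcal{Y}^j$. I then select the sub-multiset $\mathcal{R}^{j,m}\msubseteq\mathcal{Y}^{j,m}$ consisting of the points that correspond to those of $\mathcal{R}^j$. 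This gives $\mathcal{R}^{j,m}\to\mathcal{R}^j$, and I set $\mathcal{W}^{j,m}=\mathcal{Z}^{j,m}\mcup\mathcal{R}^{j,m}\to\mathcal{W}^j$.

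Several conditions must now be checked. The multiset constraints $\mathcal{Z}^{j,m}\msubseteq\mathcal{W}^{j,m}\msubseteq\mathcal{Z}^{j,m}\mcup\mathcal{Y}^{j,m}$ hold by construction, and the cardinality bound $\le N$ is preserved. The strict utility inequality $u_{t^m_j}(\mathcal{W}^{j,m})>u_{t^m_j}(\mathcal{Y}^{j,m})$ holds eventually, by joint continuity of $u$ and because a strict inequality persists under small perturbations.

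The remaining condition, and the main obstacle, is feasibility: $\mathcal{W}^{j,m}\in\chi(t^m_j)$. The new contracts are feasible by construction. The retained contracts $\mathcal{R}^{j,m}$, however, come from the arbitrary sample $\mathcal{Y}^{j,m}$, which need not lie in $\chi(t^m_j)$. Openness of $\{(t,\mathcal{W}):\mathcal{W}\in\chi(t)\}$ generally fails. I would therefore first try to resolve this by reading the definition as only requiring feasibility of the retained contracts relative to what the agent already holds: if $\mathcal{Y}^{j,m}$ is infeasible, the sample has $\mu$-measure zero anyway. If that reading is insufficient, I would restrict to samples with $\mathcal{Y}^j\in\chi(t_j)$. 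I would then prove openness relative to the set $\{(t,\mathcal{Y}):\mathcal{Y}\in\chi(t)\}$, which is closed by the closed-graph discussion preceding Lemma~\ref{Lemmacompact}. This relative openness is what the continuity argument needs, since outcomes are concentrated on that closed set. Within it, $\mathcal{R}^{j,m}\msubseteq\mathcal{Y}^{j,m}\in\chi(t^m_j)$ consists of contracts in $X_{t^m_j}$, so $\mathcal{W}^{j,m}$ is feasible.
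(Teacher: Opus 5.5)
Your argument is essentially the paper's: continuity of $u$ for the strict inequalities, lower hemicontinuity of $\bar\chi$ for the new contracts, and Proposition~\ref{quotientop} to carry the retained contracts along. You use the same implicit reading $x_{(k,\ell)}\in\bar\chi(t_k,t_\ell)$ of Definition~\ref{def:block} that the paper's proof uses. The continuity of $\mcup$ you invoke is not stated in the appendix, but it follows from Proposition~\ref{quotientop} by concatenating representatives. Your handling of feasibility is more careful than the paper's, which simply asserts $\tilde{\mathcal W}^j\in\chi(\tilde t_j)$. Openness in the full sample space can indeed fail when a block relies on a retained contract that perturbs out of $X_{\tilde t_j}$. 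Your relative openness on the closed set of samples with $\mathcal Y^j\in\chi(t_j)$ for all $j$ is the correct statement, and it is all the Portmanteau step in Theorem~\ref{thm:exist} needs, since outcomes are concentrated on that set.
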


The following is our main result. The theorem says that existence results for various stability notions can be transferred to large type and contract spaces. 

\begin{theorem}
\label{thm:exist}If for all finite matching problems with type and contract spaces $T_F\subseteq T,Y_F\subseteq Y$ a $\mathcal{G}$-stable outcome exists, then a $\mathcal{G}$-stable outcome exists.
\end{theorem}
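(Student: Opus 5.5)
We follow the two-stage approximation outlined at the start of this section, and use Lemmas \ref{Lemmacompact}, \ref{tight} and \ref{open} for the compactness and continuity steps.

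\textbf{Stage 1: finite types, Polish contracts.} First let $\nu$ have finite support $T_F$, with arbitrary (possibly irrational) weights. Rational weights $\nu_k \to \nu$ on $T_F$ yield finite problems, and the closedness in Lemma \ref{Lemmacompact} absorbs this extra limit, so we treat $\nu$ as if it had rational weights.

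Choose an increasing sequence of finite sets $Y_k \subseteq Y$ such that, for every pair $t,t' \in T_F$, the sets $Y_k \cap \bar{\chi}(t,t')$ converge to $\bar{\chi}(t,t')$ in the Hausdorff sense. This is possible because $\bar{\chi}(t,t')$ is compact and only finitely many pairs occur. By hypothesis there is a $\mathcal{G}$-stable outcome $\mu_k$ for the problem with contract space $Y_k$, and each $\mu_k$ is also an outcome for the full problem. Lemma \ref{tight} gives a subsequence converging to an outcome $\mu$ for $\nu$.

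Suppose some $G$-block arises at $\mu$. Since $T_F$ is finite and $\mu$ is a limit, there are matched types $(t_j,\mathcal{Y}^j)$ with $\mu(t_j,\mathcal{Y}^j)>0$ and contracts $x_e \in Y$ forming a block. We perturb each $x_e$ to a nearby $x_e^k \in Y_k \cap \bar{\chi}(\cdot,\cdot)$, and perturb the matched types to nearby ones charged by $\mu_k$. Lemma \ref{open}, together with continuity of $u$ and the closed graph of $\chi$, should keep the block property under small perturbations of both the samples and the proposed contracts. The set of $G$-blocked samples is open, so by the portmanteau theorem its $\otimes^n\mu_k$-measure is eventually positive. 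This is a contradiction provided the witnessing contracts lie in $Y_k$.

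This is the delicate point. The openness statement must be used in the version where the proposed contracts may be moved, and we will reprove it with that extra parameter. Individual blocks are handled in the same way.

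\textbf{Stage 2: general $\nu$.} Since $T$ is Polish, pick finitely supported $\nu_n \to \nu$. For instance, choose a countable dense set, truncate, and use a quantization map sending each point of a fine finite partition to a representative in its cell. Stage 1 gives $\mathcal{G}$-stable outcomes $\mu_n$ for $\nu_n$. Lemma \ref{tight} then gives a subsequence with $\mu_n \to \mu$, an outcome for $\nu$; here tightness uses $P$ and the compact-valuedness of $\bar\chi$.

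For stability, let $B$ be the open set of $G$-blocked $n$-samples from Lemma \ref{open}. Then $\otimes^n\mu_n \to \otimes^n\mu$ weakly, and portmanteau gives
\[\otimes^n\mu(B) \le \liminf_n \otimes^n\mu_n(B) = 0.\]
The set of individual blocks is also open, by continuity of $u$ and continuity of the submultiset structure on $\multisetDef$, and the same inequality shows that $\mu$ is individually rational.

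\textbf{Main obstacle.} Stage 2 is routine once the lemmas are available. The main obstacle is Stage 1: the blocks at $\mu_k$ are tested only with contracts in $Y_k$, while blocks at $\mu$ may use any contract in $Y$. The first approach to try is the perturbation argument above, which uses lower hemicontinuity of $\bar\chi$ (via the Hausdorff approximation) to approximate the blocking contracts from within $Y_k$. The strict utility inequality, together with continuity of $u$ on $\multisetDef$, then provides the needed slack.
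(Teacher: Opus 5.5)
\textbf{Overall.} Your route is the paper's. You first treat finitely supported $\nu$ with finite contract sets $Y_k$, then approximate a general $\nu$ by finitely supported measures, using Lemmas~\ref{Lemmacompact}, \ref{tight}, \ref{open} and the portmanteau theorem. Your Stage~2 is essentially the paper's argument, and you also check individual rationality, which the paper's proof leaves implicit.

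\textbf{The gap in Stage~1.} First, the assertion that a $G$-block arising at $\mu$ produces matched types with $\mu(t_j,\mathcal{Y}^j)>0$ is false. Finiteness of $T_F$ only makes the $T$-marginal atomic. The $X^{(\leq N)}$-component of the limit can be diffuse: take one type with $\bar\chi(t,t)=[0,1]$, $N=1$, and $\mu_k$ uniform on ever finer grids of contracts. In the general framework, a block arises exactly when the set $B$ of $G$-blocked samples has positive $\otimes^n\mu$-measure. Second, and more seriously, the portmanteau theorem applied to the open set $B$ only yields $\otimes^n\mu_k(B)>0$ for large $k$, and that contradicts nothing. Each $\mu_k$ is stable only against blocks whose new contracts lie in $Y_k$. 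The $\mu_k$-charged samples in $B$ may need contracts outside $Y_k$, with the required index growing with $k$. Your sentence ``this is a contradiction provided the witnessing contracts lie in $Y_k$'' is precisely the missing step, and perturbing one block does not by itself supply it.

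\textbf{The fix.} You need a set with three properties: it is open relative to the closed full-measure set of samples with types in $T_F$ and feasible multisets; it has positive $\otimes^n\mu$-measure; and every sample in it is blocked using contracts from one \emph{fixed} $Y_{k_0}$. The portmanteau theorem must be applied to that set, not to $B$. The paper does this with $O_m$, the set of samples $G$-blocked using only contracts from $Y_m$. Each $O_m$ is (relatively) open, and $\otimes^n\mu_{m'}(O_m)=0$ for all $m'\geq m$ because $Y_m\subseteq Y_{m'}$. Hence $\otimes^n\mu(O_m)=0$, and $\bigcup_m O_m$ exhausts $B$ up to a null set.

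Your Hausdorff choice of $Y_k$ is exactly what makes that last claim true. A sequence merely dense in $Y$, as in the paper, need not meet $\bar\chi(t,t')$ at all, for instance when $\bar\chi(t,t')$ is a singleton. So your $Y_k$ combined with this countable-union argument closes the gap. Alternatively, localize at a point of $B$ in the support of $\otimes^n\mu$: fix $\delta$ from the strict utility inequalities, then $k_0$, then apply the portmanteau theorem to the $\delta$-neighborhood.

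\textbf{A smaller point.} The hypothesis covers only rational-valued $\nu$, and the closedness in Lemma~\ref{Lemmacompact} gives an outcome but not stability of the limit. Take the Stage~2 approximants rational-valued, as the paper does, and drop the irrational case from Stage~1.
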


It follows from the proof below that the argument also works for finite type spaces. This can, for example, be useful for extending transferable utility models with finitely many types to large type spaces. Also, one can allow for any set of population distributions that is dense in the weak topology.


\begin{proof}
Let $G\in \mathcal{G}$ be a graph with $n$ vertices. We first show that if a $G$-stable outcome exists for the finite matching problem $(T_F,Y_F,\nu_F,\chi_F,u_F)$ then a $G$-stable outcome exists. The theorem then follows from $\mathcal{G}$ being countable and the fact that the countable union of measure zero sets has measure zero.\\
We first assume $\nu$ to have finite support and only rational values. For a sequence of contracts $y_1,y_2\ldots,$ that is dense in $Y$ with $y_1=\o$ we define $Y_m=\{y_1,\ldots,y_m\}$ and let $\chi_m(t):=\chi(t)\cap Y_m^{(\leq N)}$. Hence, the matching problems $(T,\nu,Y,\chi_m,u)$ are finite and, by assumption, there exists a $G$-stable outcome $\mu_m$ for each $m$. We can assume (by possibly passing to a subsequence) that the sequence $\langle\mu_m\rangle$ converges to some outcome $\mu$ by Lemma \ref{Lemmacompact} and Lemma \ref{tight}. We show that $\mu$ is $G$-stable for the unrestricted correspondence $\chi$. 
The set of $n$-samples that are $G$-blocked by blocks using only contracts in $Y_m$ is open, and the union over all $m$ gives the set of all samples that are $G$-blocked, which is open. Each of these countably many sets must have eventually $\otimes^n \mu_m$-measure zero, and hence, by the Portmanteau theorem, also $\otimes^n \mu$ measure zero. Since the countable union of measure zero sets has measure zero, $\mu$ is $G$-stable. \\ 

We now allow for $\nu$ to be general. The set of probability measures with finite support and rational values is weakly dense in $\Delta(T)$, so there exists a sequence $\langle \nu_m\rangle$ of such measures that converges weakly to $\nu$. By what we have shown above, for each such $\nu_m$, there exists a $G$-stable outcome $\mu_m$. Using again Lemma \ref{Lemmacompact} and Lemma \ref{tight}, we can assume, by possibly passing to a subsequence, that $\langle \mu_m\rangle$ converges to some outcome $\mu$. We show that $\mu$ is $\mathcal{G}$-stable. 
The set of $G$-blocked $n$-samples $O$ is open by Lemma \ref{open}. The sequence $\langle \otimes^n \mu_m\rangle$ converges weakly to $\otimes^n \mu$ by \citet[Theorem 2.8]{billing}. Therefore, by the Portmanteau theorem, 
\[0=\liminf_m \otimes^n \mu_m(O)\geq\otimes^n \mu(O)=0.\]
No $G$-block arises at $\mu$.
\end{proof}

\noindent The following lemma is a consequence of the existence result in \citet{jagadeesan2021stability}; see Appendix \ref{sec:proofs} for how to translate their theorem to our model. 

\begin{lemma}[Jagadeesan and Vocke]
\label{lem:TYfinite}
If $T$, $Y$ are finite sets, then a tree-stable outcome exists.
\end{lemma}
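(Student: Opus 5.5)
Lemma \ref{lem:TYfinite} is essentially a translation exercise: I will reduce it to the main existence theorem of \citet{jagadeesan2021stability} by building a dictionary between their finite networked market and ours. With $T$ and $Y$ finite, I will take their set of agent types to be $T$ with mass vector $\nu(t)$. Their set of contracts will be $Y$ together with the feasibility data $\bar{\chi}$, where a contract $y\in\bar\chi(t,t')$ is signed by a type-$t$ agent in role $1$ and a type-$t'$ agent in role $2$. Their choice sets will be the multisets $\chi(t)=X_t^{(\leq N)}$, with $\emptyset\in\chi(t)$. Utilities will be given by $u_t$; since $T\times\multisetDef$ is finite, continuity of $u$ is automatic.

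First I will check that their primitives match ours, namely finitely many types, finitely many contracts with two distinguishable sides, a bounded number of contracts per agent, multiple copies allowed, and preferences allowed to be arbitrary. Any discrepancy will be handled by relabeling. If ties must be excluded, I will perturb $u$ generically to strict preferences: a tree-stable outcome for a perturbation small relative to the minimal nonzero utility gap is tree-stable for $u$, because every strict improvement under $u$ remains strict.

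Second, I will translate their outcome into ours. Their outcome is a measure of agents of each type choosing each feasible bundle, together with a market-clearing condition for each contract. I will set $\mu(t,\mathcal{Y})$ equal to the mass of type $t$ choosing bundle $\mathcal{Y}$. The $T$-marginal of $\mu$ is then $\nu$ and bundles are feasible, which gives conditions 1 and 2 of Definition~\ref{def:outcome}. Their market clearing, which says that total role-$1$ signings of $y$ equal total role-$2$ signings counted with multiplicity, is exactly $\mu^*_{\multisetDef}(\{y\}\times\{1\})=\mu^*_{\multisetDef}(\{y\}\times\{2\})$. Since $Y$ is finite, this gives condition 3 for every Borel set $E$.

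Third, I will match the stability notions. Individual rationality corresponds directly to their condition that no positive-mass matched type prefers a sub-multiset. For a tree $G$, their tree-block uses positive-mass matched types at the vertices, a contract on each edge, and for each vertex a bundle $\mathcal{W}^j$ containing the new contracts $\mathcal{Z}^j$, drawn from $\mathcal{Z}^j\mcup\mathcal{Y}^j$, and strictly improving. This coincides with Definition~\ref{def:block}. In the finite setting, ``arises'' coincides with the sampling definition, since $\otimes^n\mu$ charges a point exactly when every coordinate has positive mass.

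The main obstacle is aligning the fine print, in particular the treatment of repeated contracts between the same pair, orientation of edges, and whether their blocks may drop arbitrary existing contracts. For the first point I will invoke their Appendix B1, which shows that restricting to single contracts per pair does not change stability. For the others, I will check that their tree-blocks allow exactly the retention set $\mathcal{W}^j\msubseteq\mathcal{Z}^j\mcup\mathcal{Y}^j$.
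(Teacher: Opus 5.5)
Your overall route, which reduces the lemma to the existence theorem of \citet{jagadeesan2021stability} by translating primitives, outcomes and tree-blocks, is the paper's route. The gap is that the step carrying all the content is the one you wave through with ``any discrepancy will be handled by relabeling''. Your dictionary (their types $=T$, their contracts $=Y$, their choice sets $=\chi(t)=X_t^{(\leq N)}$) is not an instance of their model. Your list of their primitives is also wrong: in particular, ``multiple copies allowed'' does not hold there. As the paper's proof points out, their model differs from ours in three ways, and each needs an explicit encoding.
(i) Each of their contracts comes with a fixed pair of counterparty types, whereas one $y\in Y$ can be feasible for many pairs $(t,t')$. So the contract set must be enlarged to record the two types and the roles; the paper uses $X\times T\times T$.
(ii) Their contracts are between agents of different types, whereas $\bar\chi(t,t)$ may be nonempty here. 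So the type space must be enlarged; the paper passes to $T^2$.
(iii) Their agents choose \emph{sets} of contracts, so a bundle such as $(y,1)\mcup(y,1)$ has no counterpart. You need distinct versions of each contract (the paper uses $X\times N\times N$), with utility of a set of versions given by the multiset it induces.
Their Appendix B1 does not resolve (iii). It concerns blocks in which a pair of agents signs several contracts with one another, not bundles in which one agent holds repeated copies of a contract.

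Once the encoding is fixed, your Steps 2 and 3 must be redone against it rather than against the identity dictionary. Going from their outcome to ours, market clearing for each encoded contract must be summed over versions and type labels to give condition 3 of Definition~\ref{def:outcome}. The more delicate direction is that every positive-mass tree-block in our sense must lift to a positive-mass tree-block of the encoded market. This lifting is what makes a tree-stable outcome there tree-stable here, and it is exactly what your proposal leaves unproved. It requires three things:
\begin{itemize}
\item For each vertex, choose an encoded matched type of positive mass inducing $(t_j,\mathcal{Y}^j)$.
\item Give each new edge contract the type labels of its endpoints and a version not already held by either endpoint, which requires enough versions.
\item Whenever two adjacent vertices carry the same type $t$, turn them into positive-mass agents of distinct encoded types; whether this is possible depends on how the type space is enlarged.
\end{itemize}
Your perturbation to strict preferences is harmless but unnecessary, since their fixed-point argument allows indifferences.
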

Combining Lemma \ref{lem:TYfinite} with Theorem \ref{thm:exist} gives us the following corollary.

\begin{corollary}\label{treeexist}
Tree-stable and, in particular, pairwise stable outcomes exist for general type and contract spaces.
\end{corollary}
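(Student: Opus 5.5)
The plan is to obtain Corollary \ref{treeexist} directly from Theorem \ref{thm:exist} applied with $\mathcal{G}$ the family of all trees, and then to deduce pairwise stability from tree-stability.

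\textbf{Verifying the hypothesis of Theorem \ref{thm:exist}.} For every finite matching problem with $T_F\subseteq T$ and $Y_F\subseteq Y$, I would invoke Lemma \ref{lem:TYfinite}, which gives a tree-stable outcome. The points to check are the following.
\begin{itemize}
\item The restricted problem is itself a matching problem, with $u$ restricted to $T_F\times X_F^{(\leq N)}$ and $\chi$ intersected with $X_F^{(\leq N)}$.
\item Continuity and compactness conditions are trivial on finite discrete sets.
\item The family of trees is countable up to vertex labelling, since vertex sets are $\{1,\dots,n\}$. This is what the proof of Theorem \ref{thm:exist} uses when it takes countable unions of null sets.
\end{itemize}
Theorem \ref{thm:exist} then yields a tree-stable outcome in the general model. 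The standing assumption on the correspondence $P$, which is needed for Lemmas \ref{Lemmacompact} and \ref{tight}, is in force throughout.

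\textbf{Pairwise stability.} A graph with two vertices and one edge is a tree, because there is a unique path between its two vertices. So the family defining pairwise stability is a subfamily of the trees. Any outcome that is stable with respect to all trees is therefore individually rational, and no block of the two-vertex one-edge shape arises at it. That is, the outcome is pairwise stable.

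\textbf{Main obstacle.} The only delicate point is the path definition in the tree notion. For directed graphs, a unique directed path between each pair of vertices might be read strictly, so that a single directed edge $(1,2)$ would give no path from $2$ to $1$. I would read trees, as in \citet{jagadeesan2021stability}, as graphs whose underlying undirected graph is a tree. Under that reading the one-edge graph, in either orientation, is a tree and the inclusion above holds.

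If one insisted on the literal directed reading, I would instead apply Theorem \ref{thm:exist} directly to the family of one-edge two-vertex graphs. Its hypothesis again follows from Lemma \ref{lem:TYfinite}, because by the translation in the appendix the finite-case tree-stability there includes pairwise blocks.
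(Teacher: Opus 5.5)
Your proposal is correct and follows the paper's route: feed Lemma \ref{lem:TYfinite} into Theorem \ref{thm:exist} with $\mathcal{G}$ the family of trees, then get pairwise stability because the two-vertex, one-edge graph is a tree. Your reading of trees through the underlying undirected graph, as in Jagadeesan and Vocke, is the intended one and spells out something the paper leaves implicit.
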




\section{Illustrative Examples}

In the following section, we give two examples to illustrate how our framework and stability notion can be applied. The examples are chosen to discuss type spaces that can only be modeled within our framework. The type spaces in the examples are the interval and the circle, two of the easiest continuum type spaces one could think of. 


\paragraph{A roommate problem with assortative preferences:} 
Let types be given by $T=[0,1]$ , and contracts be uniquely defined by the two types involved, hence $Y=T$, and $\chi(t)=X^{(\leq 1)}$. All agents have the same preferences and prefer agents of higher types while being indifferent about the role within the contract $(t, 1)$ or $(t, 2)$. Thus, the utility for type $t$ to be matched with type $t'$ is given by $u_t((t',1))=u_t((t',2))=t'$ and $u_t(\emptyset)=-1$.\\

\begin{figure}
\centering
\vspace{2cm}
\begin{tikzpicture}[scale=0.8] 

 \draw [|-|] (0,0)node[below=3pt]{$0$} -- (8,0)node[below=3pt]{$1$};
\draw (0,0) -- (4,0)node[below=12pt]{$T$};
\draw [->,dashed] (1,1) -- (7,1);
\draw (4.5,2)node[below]{better};
\end{tikzpicture}
\end{figure}

We can identify a candidate for a stable outcome with a measure on $T\times T$ with both marginals being $\nu$ since a positive measure of unmatched agents cannot occur in a stable outcome and no agent of type $t$ cares whether they are assigned contract $(t',1)$ or $(t',2)$.\\

In the only stable outcome of this example essentially every agent is matched to an agent of the same type. Formally, the only stable outcome is the unique outcome that is supported on the diagonal $\Delta=\{(t,t')\mid t=t'\}$. This outcome is obviously individually rational and stable. To see that no other outcome is stable, note that if the set $B= T\times T\setminus \Delta$ has positive measure, then from Definition \ref{def:outcome}.3 of an outcome, it follows that there is a set $B'\in T\times T$ with positive measure, such that for all $(t,t')$ in $ B'$ one has $t>t'$. Agents of matched type $(t,t')\in B'$ have an incentive to deviate by being matched with an agent of their own type instead, hence they form a pairwise block that arises at the outcome. 

\bigskip

\paragraph{A cyclic roommate problem:}

Let the types of agents be points on a circle. Formally, we let $T = [0, 1)$,
endowed with the metric $d$ given by
$d(x,y) = \min \{|x-y|, 1-|x-y|\}$.
Basically, we wrap the half-open unit interval $[0, 1)$ around a circle of circumference $1$, or,
equivalently, take the closed interval $[0, 1]$ and glue the endpoints together.\\

\begin{figure}[ht] 
\centering
\begin{tikzpicture}[scale=0.74] 
\draw[thick] (0,0) circle (2);

\draw[thick,|-|] (-6.2831,-2.5)node[below=0.2cm]{$0$} -- (6.28,-2.5)node[below=0.2cm]{$1$};
\draw[fill,color=black] (-3.2831+1.5707,-2.5) circle [radius=0.06]node[below=0.2cm]{$x$};
\draw[fill,color=black] (3.2831-1.5707,-2.5) circle [radius=0.06]node[below=0.2cm]{$y$};

\draw[<->] (-1.7,1.7) arc (135:45:2.4);
\draw (0,2.4)node[above]{$d(x,y)$};

\draw[fill,color=black] (-1.416,1.416) circle [radius=0.06]node[below right]{$x$};
\draw[fill,color=black] (1.416,1.416) circle [radius=0.06]node[below left]{$y$};
\end{tikzpicture}
\end{figure} 

Let the contracts be uniquely defined by the two types involved, $Y=T$, and every agent can only be matched to one other agent, hence $\chi(t)=X^{(\leq 1)}$. Let $\alpha\in [0,1]$ be fixed. Every agent of type $t$ aims to be matched with an agent who is $\alpha$ shifted clockwise away and measures the attractiveness of any other partner in terms of closeness to this ideal partner. Being unmatched is the worst that can happen. In this example, again, it doesn't matter which role in a contract an agent signs. Formally, let the utility for type $t$ to be matched with type $t'$ be given by $u_t\big((t',1)\big)=u_t\big((t',2)\big)=-d(t+\alpha,t')$ and $u_t(\emptyset)=-1$.\\

We can identify a candidate for a stable outcome with a measure on $T\times T$ with both marginals being $\nu$ since unmatched types cannot occur in a stable outcome and no agent of type $t$ cares whether they are assigned contract $(t,(t',1))$ or $(t,(t',2))$. Depending on $\alpha$, different outcomes can be stable. We distinguish four cases:\\
\begin{itemize}

\item[Case 1] Let $\alpha<\nicefrac{1}{4}$. In the only stable outcome, essentially every agent is matched to an agent of the same type. Formally, the only stable outcome is the unique outcome that is supported on the diagonal $\Delta=\{(t,t')\mid t=t'\}$. \\

This outcome is stable: The outcome is obviously individually rational. Assume a (pairwise) block arises, then there exists a positive measure of 2-samples consisting of two matched types $(t,t)$ and $(t',t')$ with $t\neq t'$ s.t. $u_t(t')>u_t(t)$, meaning $t'$ is closer to $t+\alpha$ than $t$ itself, i.e.  $d(t',t+\alpha)<d(t,t+\alpha)=\alpha$, and $u_{t'}(t)>u_{t'}(t')$, hence $d(t,t'+\alpha)<d(t,t+\alpha)=\alpha$. This never occurs since $\alpha<\nicefrac{1}{4}$.\\

To see that the outcome is the only stable outcome, note that if the set $B= T\times T\setminus \Delta$ has positive measure, then it follows from Definition \ref{def:outcome}.3 of an outcome that there is a set $B'\subseteq T\times T$ with positive measure, such that for all $(t,t')\in B'$, the type $t'$ is more than 180 degrees further clockwise shifted away from $t$, i.e. $t'-t \mod 1>\nicefrac{1}{2}$, in particular more shifted away than $2\alpha$ since $\alpha<\nicefrac{1}{4}$. Thus $t'$ is further away from $t+\alpha$ than $t$ itself. Hence, agents of matched type $(t,t')\in B'$ have an incentive to deviate by being matched with an agent of their own type instead and form a pairwise block that arises at the outcome. 
\item[Case 2] Let $\nicefrac{1}{4}<\alpha<\nicefrac{1}{2}$.
There is a stable outcome in which every agent is matched to an agent on the other side of the circle. Formally, this stable outcome is the unique outcome that is supported on the set $O=\{(t,t')\mid t'=t+\nicefrac{1}{2} \mod 1\}$.\\

To see that this outcome is stable, note first that it is individually rational. Now, assume a (pairwise) block arises, then similar to above, the set of 2-samples that form a block have positive measure. Such samples are matched types $(t,t+\nicefrac{1}{2} \mod 1)$ and $(t',t'+\nicefrac{1}{2} \mod 1)$, such that both can improve by deviating.
 Either $t$ is more than 180 degrees clockwise away from $t'$, or vice versa. Say, we are in the first case. Then $t'$ would be worse off with $t$ than with $t'+\nicefrac{1}{2}$.\\


\item[Case 3] Let $\alpha>\nicefrac{3}{4}$. It follows directly from the symmetry, that the stable outcomes are exactly the same as in case 1 .
\item[Case 4]  Let $\alpha=\nicefrac{1}{4}$ or $\alpha=\nicefrac{3}{4}$.
There are stable outcomes given by $\mu(t,t')>0$ if and only if $t'=t+\nicefrac{1}{2}$ or $t'=t$. This is an obvious consequence of the cases above, since for these $\alpha$ agents are indifferent between being matched to their own type or to the type on the opposite side of the circle. 

\end{itemize}




\section{Discussion}

We develop a distributional model of many-to-many matching markets with large type spaces and show that stable outcomes exist whenever stable outcomes exist in corresponding  finite models. In particular, we show that tree-stable and, hence, pairwise stable outcomes exist.\\

Similar arguments could be used to  transfer existence results for other stability concepts to large markets, such as path- or trail-stability, strong group stability, or the core. Our work should be understood as a toolkit that can be generally applied to mechanically transfer existence results for matching markets and some of the structure of stable matchings. Continuous models often allow for more convenient tools, such as necessary first-order conditions. Often, one can learn much from necessary conditions implied by a solution concept. Our existence results ensure that this is not an empty exercise about nothing.
\\ 


In future research, our model could be extended to also include externalities or match-dependent contracts. Indeed, the topological fixed-point methods used by \citet{jagadeesan2021stability} allow for indifferences, and the appropriate correspondence could be modified to include externalities, as has been done in a simpler setting by \citet{greinecker2018pairwise}.  





\appendix

\section{Polish Multispaces}
 \label{app:multi}
A \emph{multiset} is a function with finite graph whose range consists of strictly positive integers. We let $\text{supp}(m)$ be the image of the projection of $m$ onto the first coordinate. If $m$ is a multiset, we let $|m|=\sum_{(x,l)\in m}l$ be its \emph{cardinality}. If the support of a multiset is contained in a set $X$, we call the multiset also an \emph{$X$-multiset}. We can identify the set of $X$-multisets with cardinality $n$ with the quotient of $X^n$ under permutations of indices. We write $X^{(n)}$ for the set of $X$-multisets with cardinality $n$. By abuse of notation, we identify a permutation $\sigma$ of the set $\{1,2,\ldots,n\}$, an element of the symmetric group $S_n$, with the induced bijection $\sigma:X^n\to X^n$ obtained by switching coordinates accordingly. If $X$ is a topological space, $\sigma:X^n\to X^n$ is a homeomorphism. We define an equivalence relation on $X^n$ by letting $x\sim x'$ if and only if $x=\sigma(x')$ for some $\sigma\in S_n$. Abusing notation, we denote the resulting quotient set by $X^{(n)}$. There is an obvious way to identify the space of multisets $X^{(n)}$ with this quotient set. This identification allows us to topologize $X^{(n)}$. If $X$ is a topological space, we endow $X^{(n)}$ with the corresponding quotient topology.

\begin{lemma}The quotient mapping $q:X^n\to X^{(n)}$ is open.
\end{lemma}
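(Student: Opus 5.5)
We will use the standard fact that a quotient map by a group of homeomorphisms is open. By definition of the quotient topology, a set $V\subseteq X^{(n)}$ is open if and only if $q^{-1}(V)$ is open in $X^n$. So to show that $q$ is open, it suffices to prove the following: for every open $U\subseteq X^n$, the saturation $q^{-1}(q(U))$ is open in $X^n$.

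The key step is to compute this saturation explicitly. A point $x'\in X^n$ lies in $q^{-1}(q(U))$ exactly when $x'\sim x$ for some $x\in U$. That means $x'=\sigma(x)$ for some $\sigma\in S_n$ and some $x\in U$. Hence
\[
q^{-1}(q(U))=\bigcup_{\sigma\in S_n}\sigma(U).
\]
Here we use that $\sim$ is indeed the orbit relation of $S_n$. It is symmetric because $S_n$ is closed under inverses: $x=\sigma(x')$ implies $x'=\sigma^{-1}(x)$.

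Next, as noted just before the statement, each induced coordinate permutation $\sigma:X^n\to X^n$ is a homeomorphism. Indeed, it is continuous in the product topology because each coordinate of $\sigma(x)$ is a coordinate projection of $x$, and its inverse is the map induced by $\sigma^{-1}$. Therefore each $\sigma(U)$ is open. The saturation is then a finite union of open sets, so it is open, and $q(U)$ is open in $X^{(n)}$.

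There is no real obstacle in this argument. The only point requiring care is the bookkeeping in the second step: showing that the saturation is exactly the union of the translates $\sigma(U)$, with the correct direction of $\sigma$ versus $\sigma^{-1}$. This is harmless because the union runs over the whole group $S_n$.
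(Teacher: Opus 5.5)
Your proof is correct and follows the paper's argument: the paper also identifies the saturation $q^{-1}(q(U))$ with $\bigcup_{\sigma\in S_n}\sigma(U)$ and concludes from the definition of the quotient topology that $q(U)$ is open. You additionally spell out why each $\sigma(U)$ is open and why $\sim$ is the orbit relation of $S_n$, which the paper leaves implicit.
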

\begin{proof}
Let $U$ be an open subset of $X^n$ and let $V=\bigcup_{\sigma\in S_n}\sigma(U)$. Then $q(U)=q(V)$ and $q^{-1}(q(U))=V$, so $q(U)$ is an open subset of $X^{(n)}$.
\end{proof}

\begin{proposition}\label{quotientop}
Let $X$ be a Hausdorff space. Then a net $\big\langle [x_\alpha]\big\rangle$ in $X^{(n)}$ converges to $[x]\in X^{(n)}$ if and only if there is a net $\langle \sigma_\alpha\rangle$ in $S_n$ such that $\langle \sigma_\alpha x_\alpha\rangle$ converges to $x\in X^n$.
\end{proposition}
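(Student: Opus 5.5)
The proof will handle the two directions separately. The tools are the finiteness of $S_n$, the fact that the quotient map $q$ is continuous and open (previous lemma), and the Hausdorff property of $X$, which is what lets us pick out a single permutation that works.

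\emph{Sufficiency.} Suppose $\langle \sigma_\alpha x_\alpha\rangle$ converges to $x$ in $X^n$. Then $q(\sigma_\alpha x_\alpha)=[x_\alpha]$ because $q$ is constant on $S_n$-orbits. Since $q$ is continuous, $[x_\alpha]=q(\sigma_\alpha x_\alpha)\to q(x)=[x]$. This direction needs neither Hausdorffness nor openness.

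\emph{Necessity.} Suppose $[x_\alpha]\to[x]$. Let $x=(x^1,\dots,x^n)$ and let $U=U_1\times\cdots\times U_n$ range over basic open neighborhoods of $x$. I will use the following fact.

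\textbf{Key fact.} If $[x_\alpha]$ lies in $q(U)$, then some $\sigma\in S_n$ satisfies $\sigma x_\alpha\in U$. Indeed, $q^{-1}(q(U))=\bigcup_\sigma \sigma(U)$.

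Since $q$ is open, $q(U)$ is an open neighborhood of $[x]$. So for every such $U$, eventually $[x_\alpha]\in q(U)$, and by the key fact there is a permutation carrying $x_\alpha$ into $U$.

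The difficulty is to choose a single $\sigma_\alpha$ for each $\alpha$ that works for all $U$ at once. A naive choice made separately for each $U$ will not converge.

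\textbf{Step 1: separate the distinct points.} Use Hausdorffness. Let $z_1,\dots,z_k$ be the distinct points among the coordinates of $x$, with multiplicities $m_1,\dots,m_k$. Pick pairwise disjoint open sets $W_1\ni z_1,\dots,W_k\ni z_k$, and restrict attention to neighborhoods $U$ with $U_i\subseteq W_j$ whenever $x^i=z_j$.

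\textbf{Step 2: fix a canonical permutation.} For $\alpha$ large enough that $[x_\alpha]\in q(W)$, where $W$ is the product neighborhood built from the $W_j$, some permutation places exactly $m_j$ coordinates of $x_\alpha$ in $W_j$, and these land on the slots where $x^i=z_j$. Because the $W_j$ are disjoint, the assignment of coordinates of $x_\alpha$ to the points $z_j$ is uniquely determined. The only freedom left is to permute within each group of slots sharing the same $z_j$. Such permutations leave $x$ unchanged and leave membership in any $U$ of the restricted form unchanged, since those $U_i$ are chosen to depend only on $z_j$. We may take $U_i=V_j$ for all slots belonging to $z_j$.

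So define $\sigma_\alpha$ to be any permutation realizing this assignment when $[x_\alpha]\in q(W)$, and the identity otherwise.

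\textbf{Step 3: verify convergence.} Given a neighborhood $U$ of the restricted, symmetric-within-groups form, eventually $[x_\alpha]\in q(U)\subseteq q(W)$. Some $\tau$ has $\tau x_\alpha\in U$. By disjointness of the $W_j$, $\tau$ and $\sigma_\alpha$ induce the same assignment of coordinates to groups, so they differ only within groups. Therefore $\sigma_\alpha x_\alpha\in U$ as well. Such $U$ form a neighborhood base at $x$, so $\sigma_\alpha x_\alpha\to x$.

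The main obstacle is Step 2: making the choice of $\sigma_\alpha$ independent of the neighborhood $U$. Hausdorffness, through the disjoint $W_j$, resolves it by reducing the ambiguity to permutations that fix $x$.
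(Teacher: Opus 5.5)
Your proof is correct and follows essentially the paper's argument. Sufficiency comes from continuity of $q$. For necessity, both proofs use Hausdorffness to build a neighborhood of $x$ whose $S_n$-translates are either equal or disjoint, choose $\sigma_\alpha$ to move $x_\alpha$ into it, and use openness of $q$ to conclude. The only difference is that you build this neighborhood as the product $\prod_i W_{j(i)}$ by separating the distinct coordinates of $x$ in $X$, whereas the paper separates the orbit points of $x$ in $X^n$.

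Your restriction to basic neighborhoods that are symmetric within groups (i.e., invariant under the stabilizer of $x$) is genuinely needed when $x$ has repeated coordinates. It makes your version cleaner than the paper's, which asserts a unique representative of $[x_\alpha]$ in $O$ and disjointness of the sets $\sigma(O\cap W)$---claims that fail in that case.
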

\begin{proof}If there is a net $\langle \sigma_\alpha x_\alpha\rangle$ in $X^n$ converging to $x\in X^n$, then $\big\langle [\sigma_\alpha x_\alpha]\big\rangle=\big\langle [ x_\alpha]\big\rangle$ converges to $[x]$ by the continuity of the quotient mapping.


For the other direction, assume that  a net $\langle [x_\alpha]\rangle$ in $X^{(n)}$ converges to $[x]\in X^{(n)}$. For each $x'\in [x]$, choose an open neighborhood $O(x')\subseteq X^n$ of $x'$ such that the family $\{ O(x')\mid x'\in [x]\}$ is disjoint. That this is possible follows from $X$ and, therefore, $X^n$ being Hausdorff. Let 
\[O=\bigcap\big\{\sigma(O(x'))\mid x\in \sigma(O(x')), x'\in [x],\sigma\in S_n\big\}.\]
Then $O$ is an open neighborhood of $x$ and the open set $\sigma(O)$ contains exactly one point of $[x]$ for each $\sigma\in S_n$. Since the quotient mapping $q:X^n\to X^{(n)}$ is open and $x\in O$,  $q(O)$ is an open neighborhood of $q(x)=[x]$. Since $\langle [x_\alpha]\rangle$ in $X^{(n)}$ converges to $[x]\in X^{(n)}$, the net $\langle [x_\alpha]\rangle$ must be eventually in $q(O)$. So $q^{-1}(q(O))=\bigcup_{\sigma\in S_n}\sigma(O)$ must contain $q^{-1}([x_\alpha])$ for $\alpha$ large enough. And for $\alpha$ large enough, there will be exactly one point $x'_\alpha\in q^{-1}([x_\alpha])\cap O$. For such $\alpha$, fix $x'_\alpha$ accordingly and let $\sigma_\alpha$ be the unique element of $S_n$ such that $\sigma_\alpha x_\alpha=x_\alpha'$. Specify $\sigma_\alpha$ arbitrarily for the remaining $\alpha$'s. We claim that $\langle \sigma_\alpha x_\alpha\rangle$, which eventually coincides with $\langle x_\alpha'\rangle$ converges to $x$. Indeed, let $W$ be an open neighborhood of $x$ in $X^n$. For $\alpha$ large enough, $[x_\alpha]$ must lie in $q(W\cap O)$ since $q$ is an open mapping and $x_\alpha\sim\sigma_\alpha x_\alpha=x'_\alpha\in O$ by construction. But then, $q^{-1}([ x_\alpha])\subseteq\bigcup_{\sigma\in S_n}\sigma(W\cap O)$. Since the sets of the form $\sigma(O\cap W)\subseteq\sigma(O)$ are disjoint, we must have $x'_\alpha\in O\cap W\subseteq W$. So $\langle x'_\alpha\rangle$ converges to $x$ and so does, therefore $\langle \sigma_\alpha x_\alpha\rangle$.
\end{proof}

\begin{proposition}If $X$ is a Polish space, then $X^{(n)}$ is a Polish space too.
\end{proposition}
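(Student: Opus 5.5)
I would prove that $X^{(n)}$ is Polish by putting an explicit complete metric on it. Let $d$ be a complete metric on $X$ compatible with its topology, and on $X^n$ take the max-metric $d_n(x,x')=\max_i d(x_i,x'_i)$, which is complete and compatible with the product topology. On $X^{(n)}$ define the \emph{matching distance}
\[\rho([x],[x'])=\min_{\sigma\in S_n} d_n(\sigma x,x').\]
It is well defined because $d_n$ is invariant under applying the same permutation to both arguments. The plan has four steps.

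\textbf{Step 1: $\rho$ is a metric.} Symmetry follows from $d_n(\sigma x,x')=d_n(x,\sigma^{-1}x')$. For the triangle inequality, take optimal permutations for each pair and compose them. If $\rho([x],[x'])=0$, then $\sigma x=x'$ for some $\sigma$, since the minimum over the finite group $S_n$ is attained, so $[x]=[x']$.

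\textbf{Step 2: $\rho$ induces the quotient topology.} I would use the convergence characterization in Proposition \ref{quotientop}, which applies because $X$ is metrizable and hence Hausdorff. Both topologies are first countable: $\rho$ is a metric, and the quotient topology is first countable because $q$ is open and $X^n$ is first countable. So it suffices to compare sequences.
\begin{itemize}
\item Suppose $[x_k]\to[x]$ in the quotient topology. Proposition \ref{quotientop} gives permutations $\sigma_k$ with $\sigma_k x_k\to x$, so $\rho([x_k],[x])\le d_n(\sigma_k x_k,x)\to 0$.
\item Conversely, suppose $\rho([x_k],[x])\to0$. Choose minimizing $\sigma_k$; then $\sigma_k x_k\to x$ in $X^n$, and the easy direction of the same proposition gives convergence in the quotient.
\end{itemize}
This gives a topological identification of $(X^{(n)},\rho)$ with the quotient.

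\textbf{Step 3: separability.} $X^{(n)}=q(X^n)$ is a continuous image of a separable space.

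\textbf{Step 4: completeness.} This is the step I expect to be the only real obstacle. Let $\langle [x_k]\rangle$ be $\rho$-Cauchy. Pass to a subsequence with $\rho([x_k],[x_{k+1}])<2^{-k}$. Then choose representatives inductively: fix $x'_1=x_1$, and given $x'_k$, choose $\sigma$ so that $x'_{k+1}=\sigma x_{k+1}$ satisfies $d_n(x'_k,x'_{k+1})=\rho([x_k],[x_{k+1}])$. This is possible because $\rho([x'_k],[x_{k+1}])$ is attained by some permutation. The sequence $\langle x'_k\rangle$ is then $d_n$-Cauchy in the complete space $X^n$, so it converges to some $x$. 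Hence $\rho([x_k],[x])\le d_n(x'_k,x)\to0$ along the subsequence. A Cauchy sequence with a convergent subsequence converges, so $(X^{(n)},\rho)$ is complete and therefore Polish.

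If one also wants $X^{(\le N)}$ to be Polish, it is the finite disjoint union of the $X^{(n)}$, $n\le N$, and a finite topological sum of Polish spaces is Polish.
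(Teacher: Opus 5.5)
Your proof is correct and follows essentially the same route as the paper: the same permutation-minimized sup metric, separability via the image under $q$ of a countable dense set, identification of the topology through Proposition \ref{quotientop}, and completeness via lifting representatives with optimal consecutive permutations. One point in your favor is that the paper lifts the entire Cauchy sequence and simply asserts that the lifts form a Cauchy sequence in $X^n$, which does not follow from small consecutive gaps alone; your preliminary passage to a subsequence with $\rho([x_k],[x_{k+1}])<2^{-k}$ makes that step rigorous.
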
  
\begin{proof}Let $d$ be a compatible complete metric topologizing $X$ and let $d_\infty$ be the metric on $X^n$ given by $d_\infty(x,x')=\max_{i=1}^n d(x_i,x'_i)$. Under this metric, elements of $S_n$ are isometries. We define a function $d^*:X^{(n)}\times X^{(n)}\to\mathbb{R}$ by 
\[d^*\big([x],[x']\big)=\min_{\sigma\in S_n} d_\infty(x,\sigma x').\]
Note that the choice of the representatives $x$ and $x'$ is entirely irrelevant, so this function is well-defined. It is easily shown using the group structure of $S_n$ that $d^*$ is a metric. That $d^*\big([x],[x']\big)=0$ is equivalent to $[x]=[x']$ follows from $x\sim x'$ being equivalent to $x=\sigma x'$ for some $\sigma\in S_n$. To note that $d^*$ is symmetric, observe that
 \[d_\infty(x,\sigma x')=d_\infty(\sigma^{-1}x,\sigma^{-1}\sigma x')=d_\infty(\sigma^{-1} x,x')=d_\infty(x',\sigma^{-1}x).\]
The triangle inequality follows from
 \[d_\infty(x,\sigma x')+d_\infty(x',\sigma' x'')=d_\infty(x,\sigma x')+d_\infty(\sigma x' ,\sigma\sigma' x'')\geq d_\infty(x,\sigma\sigma' x'').\]

Next, we show $d^*$ is complete. Let $\langle [x_n]\rangle$ be a Cauchy sequence. Recursively, define a sequence $\langle x_n'\rangle$ such that $x_1'=x_1$, and $x_n'\sim x_n$ and $d_\infty(x_n',x_{n+1}')=d^*\big([x_n],[x_{n+1}]\big)$ for all $n$. The sequence $\langle x_n'\rangle$ is a Cauchy sequence in the complete metric space $(X^n,d_\infty)$ and converges to some $x\in X^n$. Clearly, $d^*\big([x_n],[x]\big)\leq d_\infty(x_n,x)$, so $\langle [x_n]\rangle$ converges to $[x]$.\smallskip

We have to show that $d^*$ actually metrizes the quotient topology on $X^{(n)}$.  We are going to use Proposition \ref{quotientop}. Let $\langle [x_\alpha]\rangle$ be a net in $X^{(n)}$ converging to $[x]$ in the quotient topology. There must be a net $\langle \sigma_\alpha\rangle$ in $S_n$ such that $\langle \sigma_\alpha x_\alpha\rangle$ converges to $x\in X$. Consequently, $d_\infty(x_\alpha,x)$ converges to $0$ and, therefore, so does $d_\infty\big([x_\alpha],[x]\big).$ Consequently, the topology induced by $d^*$ on $X^{(n)}$ is at least as coarse as the quotient topology. 

For the other direction, let $O$ be a nonempty open set in the quotient topology and $[x]\in O$. The set $q^{-1}(O)$ is open in $X^n$ and $[x]\subseteq q^{-1}(O)$. For each $x'\in [x]$, let $B_\epsilon(x')$ be the corresponding $d_\infty$-ball with radius $\epsilon$ and center $x'$. For $\epsilon>0$ small enough, 
$\bigcup_{x'\in [x]}B_\epsilon(x')\subseteq q^{-1}(O)$. For such $\epsilon>0$ let $B^*_\epsilon([x])$ be the corresponding $d_\infty$-ball with this radius around $[x]$. It is clear that $B^*_\epsilon([x])\subseteq O$, so the topology induced by $d^*$ is at least as fine as the quotient topology.\smallskip
 
Finally, to show that  $X^{(n)}$ is separable, let $D\subseteq X^n$ be a countable dense set. Since $q$ is continuous, the countable set $q(D)$ is dense in $q(X^n)=X^{(n)}$.
\end{proof}

We define a function $*:\mathcal{M}\big(X^{(n)}\big)\to\mathcal{M}\big(X\big)$ as follows. There exists a universally measurable function $s:X^{(n)}\to X^n$ such that $q\circ s$ is the identity on $X^{(n)}$ (Aliprantis and Border, Corollary 18.23). For $i=1,\ldots,n$, let $\pi_i:X^n\to X$ be the corresponding projection.
Now define the $*$-mapping by 
\[\tau^*(E)=\sum_{i=1}^n \tau\circ s^{-1}\big(\pi_i^{-1}(E)\big)\]
for every Borel set $E\subseteq X$. 

\begin{lemma}\label{counting}
The function $*:\mathcal{M}\big(X^{(n)}\big)\to\mathcal{M}\big(X\big)$ is continuous.
\end{lemma}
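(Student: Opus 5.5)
The idea is to rewrite the $*$-mapping as an integral against a bounded continuous kernel. Weak continuity then reduces to the defining property of weak convergence on $X^{(n)}$. Here $\mathcal{M}(\cdot)$ denotes the finite Borel measures with the weak topology.

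\textbf{Step 1: independence of the selection.} For a Borel set $E\subseteq X$ and a multiset $m\in X^{(n)}$, the quantity $\sum_{i=1}^n \mathbf{1}_E\big(\pi_i(s(m))\big)$ equals $\sum_{x\in E} m(x)$. This is the number of elements of $m$ in $E$, counted with multiplicity, so it does not depend on which representative $s(m)$ is chosen. Hence
\[\tau^*(E)=\int_{X^{(n)}} \sum_{i=1}^n \mathbf{1}_E\big(\pi_i(s(m))\big)\,d\tau(m),\]
and by linearity and monotone approximation by simple functions,
\[\int_X f\,d\tau^*=\int_{X^{(n)}} \hat f\,d\tau,\qquad \hat f(m)=\sum_{i=1}^n f\big(\pi_i(s(m))\big)=\sum_{x} m(x)f(x),\]
for every bounded Borel $f:X\to\mathbb{R}$. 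In particular $\hat f=\tilde f\circ s$ with $\tilde f(x_1,\dots,x_n)=\sum_i f(x_i)$ a symmetric function on $X^n$. Because $\tilde f$ is symmetric, $\hat f$ is in fact the function induced by $\tilde f$ on the quotient, $\hat f\circ q=\tilde f$, and universal measurability of $s$ plays no further role.

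\textbf{Step 2: continuity of $\hat f$.} If $f\in C_b(X)$, then $\tilde f$ is continuous and bounded on $X^n$ and constant on $S_n$-orbits. It therefore factors continuously through the quotient map $q$, by the universal property of the quotient topology. So $\hat f\in C_b(X^{(n)})$, with $\|\hat f\|_\infty\le n\|f\|_\infty$. Alternatively, one can use Proposition~\ref{quotientop}: if $[x_\alpha]\to[x]$, choose $\sigma_\alpha$ with $\sigma_\alpha x_\alpha\to x$; then $\tilde f(\sigma_\alpha x_\alpha)=\tilde f(x_\alpha)\to \tilde f(x)$.

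\textbf{Step 3: conclusion.} Suppose $\tau_\alpha\to\tau$ weakly in $\mathcal{M}(X^{(n)})$, and let $f\in C_b(X)$. Then
\[\int f\,d\tau_\alpha^*=\int\hat f\,d\tau_\alpha\to\int \hat f\,d\tau=\int f\,d\tau^*,\]
so $\tau_\alpha^*\to\tau^*$ weakly. Since $\tau^*(X)=n\,\tau(X)$ is finite, $\tau^*$ lies in $\mathcal{M}(X)$. The same reasoning applied to $X^{(\leq N)}$, summing over the components $X^{(k)}$ for $k\le N$, gives the version used in Definition~\ref{def:outcome}. Each $X^{(k)}$ is clopen in $X^{(\leq N)}$, so $\hat f$ stays continuous on the disjoint union.

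\textbf{Main obstacle.} The only delicate point is Step 1: checking that the measurable selection $s$ does not matter, and extending the set identity to the integral identity for all bounded Borel $f$. Once the integrand is seen to be the symmetric function $\sum_i f(x_i)$ evaluated on orbits, continuity is essentially immediate.
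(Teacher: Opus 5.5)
Your proof is correct and follows essentially the same route as the paper: you rewrite $\int f\,d\tau^*$ as $\int \sum_{i=1}^n f\circ\pi_i\circ s\,d\tau$, then use the universal property of the quotient topology to see that this integrand pulls back under $q$ to the continuous symmetric function $\sum_i f(x_i)$, hence is bounded and continuous. Your extra check that the selection $s$ is irrelevant, so the integrand is Borel rather than merely universally measurable, only refines the paper's change-of-variables step.
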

\begin{proof}We need to show that the function $\tau\mapsto \int f~\mathrm d\tau^*$ is continuous for every bounded continuous function $f:X\to\mathbb{R}$.
Now, \[\int f~\mathrm d\tau^*=\int f~\mathrm d \sum_{i=1}^n \tau\circ s^{-1}\circ\pi_i^{-1}=\sum_{i=1}^n\int f~\mathrm d\tau\circ s^{-1}\circ\pi_i^{-1}.\]
By the change of variables formula for pushforward measures, this is equal to
\[\sum_{i=1}^n\int f\circ\pi_i\circ s~\mathrm d\tau=\int \sum_{i=1}^n f\circ\pi_i\circ s~\mathrm d\tau.\]
By the definition of the weak topology on $\mathcal{M}(X^{(n)})$, it suffices to show that the bounded function \[\sum_{i=1}^n f\circ\pi_i\circ s:X^{(n)}\to\mathbb{R}\] is continuous. By the universal property of the quotient topology, it suffices to show that its composition with $q$, 
\[\sum_{i=1}^n f\circ\pi_i\circ s\circ q:X^n\to\mathbb{R},\] is continuous.
Now, for $(x_1,\ldots,x_n)\in X^n$, we have \[(s\circ q) (x_1,\ldots,x_n)=(x_{\sigma^{-1} (1)},\ldots,x_{\sigma^{-1}(n)})\] for some $\sigma\in S_n$. Consequently, \[\bigg(\sum_{i=1}^n f\circ\pi_i\circ s\circ q\bigg)(x_1,\ldots,x_n)=\sum_{i=1}^n f(x_{\sigma^{-1}(i)})=\sum_{i=1}^n f(x_i).\]
Since the function $(x_1,\ldots,x_n)\mapsto \sum_{i=1}^n f(x_i)$ is clearly continuous, we are done.
\end{proof}

\section{Proofs}
\label{sec:proofs}




\begin{proof}[Proof of Lemma \ref{Lemmacompact}]
Clearly, the set of all pairs of such measures $(\nu,\mu)$ on $T\times \multisetDef$ with the $T$-marginal of $\mu$ being $\nu$ and supported on the closed graph of $\chi$ is closed by the continuity of the marginal function and the Portmanteau theorem, respectively. It remains to show that the third condition in the definition of an outcome defines a closed set. Since the marginal mapping is continuous and the function $\mu\mapsto\mu^*$ is continuous by Lemma \ref{counting}, it suffices to show that the function that maps a measure $\tau$ on $\multisetDef$ to the measure $\tau\big(\cdot\times\{i\}\big)$ is continuous for $i=1,2$. This follows from the Portmanteau theorem. For if $E$ is a continuity set with respect to the $Y$-marginal of $\tau$, then $E\times\{i\}$ is a continuity set too since the boundary of $\{i\}$ is empty. Indeed,\[\partial\big(E\times \{i\}\big)=\big(\partial E\times\textnormal{cl}(\{i\})\big)\cup\big(\textnormal{cl}(E)\times \partial\{i\}\big)=\partial E\times{\{i\}}\subseteq\partial E\times\{1,2\},\] and the last set has $\tau$-measure zero by assumption.
\end{proof}

\begin{proof}[Proof of Lemma \ref{tight}]
It suffices to show that the set $\{(\nu_n,\mu_n)\}$ is relatively compact, which is equivalent to being tight by Prohorov's theorem. The result follows then from Lemma \ref{Lemmacompact}.

Let $\epsilon>0$. Since $\langle \nu_n\rangle$ converges, the set of its terms is relatively compact and, hence, tight. In particular, there exists a compact set $K_\epsilon\subseteq T$ such that $\nu_n(T\setminus K_\epsilon)<\epsilon$ for all $n$. As we have argued in the main text, the existence of an upper hemicontinuous and compact-valued correspondence $P:T\rightrightarrows T$ such that $\bar{\chi}(t,t')=\emptyset$ for $t'\notin P(t)$ implies that the correspondence $t\mapsto X_t$ is upper hemicontinuous and compact-valued. This in turn implies that the correspondence $\chi$ is upper hemicontinuous and compact-valued.\footnote{Use the continuity of the quotient mapping from $n$-tuples to $n$-multisets, with \citet[Theorem 17.23, Theorem 17.27, and Theorem 17.28]{AliprantisBorder2003}.} Therefore, the set $\chi(K_\epsilon)=\bigcup_{t\in K_\epsilon}\chi(t)$ is compact by \citet[Theorem 17.8]{AliprantisBorder2003}. Now, using the three conditions defining an outcome, we have 
\[1-\epsilon<\nu_n(K_\epsilon)=\mu_n\big(K_\epsilon\times \multisetDef \big)=\mu_n\big(K_\epsilon\times \chi(K_\epsilon)).\]
So the set $K_\epsilon\times \chi(K_\epsilon)$ is compact, and $\mu_n(K_\epsilon\times \chi(K_\epsilon))>1-\epsilon$ for all $n$. So $\{(\nu_n,\mu_n)\}$ is tight and, by Prohorov's theorem, relatively compact.
\end{proof}

\begin{proof}[Proof of Lemma \ref{open}]
If an n-sample $t^n$ is a $G$-block, then any n-sample $\tilde{t}^n$ close enough to $t^n$ is a $G$-block: By continuity of the utility function $u:T\times \multisetDef\to\mathbb{R}$ it follows that the set of possible contract choices $x_{(k,l)}$ in Definition \ref{def:block} such that the corresponding $\tilde{\mathcal{Z}}^j$ satisfies the `desirability property' (there exists a $\tilde{\mathcal{W}}^j\msubseteq \tilde{\mathcal{Z}}^j\mcup \tilde{\mathcal{Y}}^j$ such that $\tilde{\mathcal{Z}}^j\msubseteq \tilde{\mathcal{W}}^j$, and $u_{\tilde{t}_j}(\tilde{\mathcal{W}}^j)>u_{\tilde{t}_j}(\tilde{\mathcal{Y}}^j)$ for each $1\leq j \leq n$)
is open. From the lower hemicontinuity of $\bar{\chi}$ and Proposition \ref{quotientop} it thus follows, that for all $(k,l)\in G$ there exists a choice of contracts $\tilde{x}_{(k,l)}$, such that there exists such a $\tilde{\mathcal{W}}^j$ that lies in  $\chi(\tilde{t}_j)$ for all $1\leq j\leq n$; i.e. there is a set of new contract choices $\tilde{\mathcal{W}}^j$ including the new contracts $\tilde{x}_{(k,l)}$ for all involved matched types $\tilde{t}^j$ that is not only desirable but also feasible. 
\end{proof}

\begin{proof}[Proof of Lemma \ref{lem:TYfinite}]
In comparison to the model in \citet{jagadeesan2021stability}, contracts in our model have specified sides, do not uniquely specify the pair of types involved, agents of a type can sign contracts with agents of the same type, and we allow for multisets. While in large markets the assumptions of our model are either necessary or more natural, in finite markets one model can easily be transformed into the other by modifying the contract and type sets since we do not have to preserve any nontrivial topological structure. 
To apply the existence theorem we can construct for each matching problem in our model a matching problem in the model of \citet{jagadeesan2021stability}. The contract space can be expanded to $X\times T\times T$ to have uniquely specified types for each contract and by enriching the space of contracts the sides of contracts can be encoded too. We can expand the set of types to $T^2$ to make sure that agents need not sign contracts with the same type. To transfer a model with multisets with a maximum of $N$ contracts to a model with sets with a maximum of $N$ contracts we can construct a new contract space $X\times N\times N$, such that agents can sign different versions of a contract instead of signing the contract several times. 
\end{proof}








\bibliographystyle{chicago}

\end{document}